\newtheorem{theorem}{Theorem}
\newtheorem{remark}[theorem]{Remark}
\newtheorem{lemma}[theorem]{Lemma}
\newtheorem{proposition}[theorem]{Proposition}
\numberwithin{equation}{section}
\numberwithin{theorem}{section}
\title{On the probability of positive-definiteness in the gGUE via semi-classical Laguerre polynomials}
\author{Alfredo Dea\~{n}o\footnotemark[1]\, and Nicholas J. Simm\footnotemark[2]}
\date{\today}
\begin{document}
\maketitle
\renewcommand{\thefootnote}{\fnsymbol{footnote}}
\footnotetext[1]{School of Mathematics, Statistics and Actuarial Science.
University of Kent. Canterbury CT2 7FS, UK.
A.Deano-Cabrera@kent.ac.uk}
\footnotetext[2]{Mathematics Institute.
University of Warwick. Coventry CV4 7AL, UK. 
n.simm@warwick.ac.uk}

\begin{abstract}
In this paper, we compute the probability that an $N \times N$ matrix from the generalised Gaussian Unitary Ensemble (gGUE) is positive definite, extending a previous result of Dean and Majumdar \cite{DM}. For this purpose, we work out the  large degree asymptotics of semi-classical Laguerre polynomials and their recurrence coefficients, using the steepest descent analysis of the corresponding Riemann--Hilbert problem.
\end{abstract}

\section{Introduction and main results}

The \textit{Gaussian Unitary Ensemble} (GUE) is the most classical and studied example of a unitarily invariant Hermitian random matrix ensemble. Given the set of $N \times N$ Hermitian matrices $\mathcal{H}_{N}$, one defines a probability measure
\begin{equation}
dP(M_N)=\frac{1}{\mathcal{Z}_N} e^{-N\, \mathrm{Tr} V(M_N)}dM_N, \label{gendens}
\end{equation}
where $dM_{N}$ is the Lebesgue measure on the independent entries of $M_{N}$, and $\mathcal{Z}_N$ is the partition function:
\begin{equation}\label{general_ZN}
\mathcal{Z}_N=\int_{\mathcal{H}_{N}} e^{-N\, \mathrm{Tr} V(M_N)}dM_N.
\end{equation}

The potential $V$ is a smooth function with sufficient growth at infinity, so that \eqref{general_ZN} is well defined, and the GUE corresponds to the quadratic case $V(x)=x^{2}$, see references \cite{AGZ, Forrester, Mehta} for relevant background.

In this paper we are interested in the probability that matrices drawn at random from \eqref{gendens} are positive definite, denoted here by $\mathbb{P}(M_{N}>0)$. As well as being a natural question within random matrix theory, in several situations in the physics literature $M_{N}$ is used to model the Hessian matrix of random high-dimensional \textit{energy surfaces}, see \textit{e.g.} \cite{AAE, CGG, DM, Fyodorov} and references therein. In such contexts $\mathbb{P}(M_{N}>0)$ provides important information on the stability (maxima and minima) of such energy surfaces. Outside physics, this question turns out to appear explicitly in certain number theoretical problems \cite{BCFJK}.

In the GUE case, the earliest investigations of this probability appeared in the string theory and cosmology literature \cite{AAE}, where it was argued that $\mathbb{P}(M_N>0)$ decays exponentially in $N^{2}$ (at least implicitly, this already followed from the large deviations principle of Ben Arous and Guionnet \cite{BG}). However, the multiplicative constant in these asymptotics remained unknown until the work of Dean and Majumdar \cite{DM}, who showed using Coulomb gas techniques that
\begin{equation}
\log \mathbb{P}(M_{N}>0) = -c_{1}N^{2}+o(N^{2}) \label{dmprob}
\end{equation}
where
\begin{equation}\label{c1}
c_{1}=\frac{\log 3}{2}.
\end{equation}
Then in subsequent work \cite{BEMN11}, further terms in the asymptotic expansion of \eqref{dmprob} were computed using the technique of loop equations, where it was shown that
\begin{equation}
	\log \mathbb{P}(M_{N}>0) = -c_{1}N^{2}+c_{2}\log(N)+c_{3}+o(1) \label{loopresult}
\end{equation}
where 
\begin{equation}\label{c2c3}
c_{2}=-\frac{1}{12}, \qquad
c_{3} = \frac{\log 3}{8}-\frac{\log 2}{6}+\zeta'(-1),
\end{equation}
and $\zeta(s)$ is the Riemann zeta function. Under certain technical assumptions, Borot and Guionnet \cite{BG13} later developed a rigorous mathematical basis for the application of the loop equations, so that \eqref{loopresult} can be said to constitute a rigorously established result. Our aim will be to give a simple, unified proof of \eqref{loopresult} using the methods of orthogonal polynomials. Our results also apply to the \emph{generalised GUE}, leading to a 1-parameter extension of the asymptotics \eqref{loopresult}, which as far as we are aware have not appeared in either the mathematics or physics literature.

Like the ordinary GUE, the generalised GUE is defined on the set of $N \times N$ Hermitian matrices, but now the probability measure has the form
\begin{equation}\label{dP_gGUE}
dP(M_{N})=\frac{|\det(M_{N})|^{\lambda}}{\mathcal{Z}^{\mathrm{gGUE}}_N} \,\mathrm{exp}\left(-N\,\mathrm{Tr}(M_{N}^{2})\right)dM_{N},
\end{equation}
where we assume $\lambda>-1$ to ensure finiteness of the normalizing constant $\mathcal{Z}^{\mathrm{gGUE}}_N$:
\begin{equation}
\mathcal{Z}^{\mathrm{gGUE}}_N
=
\int_{\mathcal{H}_N} |\det(M_{N})|^{\lambda}\,\mathrm{exp}\left(-N\,\mathrm{Tr}(M_{N}^{2})\right)dM_{N}. \label{gguepf}
\end{equation}

This partition function depends implicitly on $\lambda$, although for simplicity of notation we do not emphasize it. Ensembles of the form \eqref{dP_gGUE}, with extra algebraic terms in the density, were studied extensively in the literature on matrix models under the name \textit{Gauss-Penner model}, see \cite{Deo,pen2} for details and applications. 

The main purpose of this paper is to prove the following

\begin{theorem}\label{Th1}
	Let $\mathbb{P}(M^{(\lambda)}_{N}>0)$ denote the probability that a random matrix from ensemble \eqref{dP_gGUE} is positive definite. Then for any fixed $\lambda>-1$, we have the asymptotic expansion as $N \to \infty$,
	\begin{equation}\label{asymp_PM}
		\begin{split}
&\log \mathbb{P}(M^{(\lambda)}_{N}>0) = -c_{1}N^{2}-\frac{\lambda \log(3)}{2}N+\left(c_{2}+\frac{\lambda^{2}}{4}\right)\log(N)+c_{3}\\
&+\frac{3\lambda^{2}}{4}\log(2)-\frac{\lambda^{2}}{2}\log(3)-\log\frac{G(\tfrac{3}{2})G(\tfrac{1}{2})G(\lambda+1)}{G(\tfrac{\lambda+3}{2})G(\tfrac{\lambda+1}{2})G(1)}+\mathcal{O}(N^{-1}),
\end{split}
	\end{equation}
	where $c_{1}, c_{2}$ and $c_{3}$ are the explicit constants defined above and $G(z)$ is the Barnes G function \cite[\S 5.17]{OLBC10}.
\end{theorem}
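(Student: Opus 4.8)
The plan is to turn the probability into a ratio of Hankel determinants. Since $M_N^{(\lambda)}>0$ is exactly the event that all eigenvalues are positive, integrating the eigenvalue density of \eqref{dP_gGUE} gives
\[
\mathbb{P}(M_N^{(\lambda)}>0)=\frac{\int_{(0,\infty)^N}\prod_{i<j}(x_i-x_j)^2\prod_i x_i^{\lambda}e^{-Nx_i^2}\,dx}{\int_{\mathbb{R}^N}\prod_{i<j}(x_i-x_j)^2\prod_i|x_i|^{\lambda}e^{-Nx_i^2}\,dx}=:\frac{\mathcal{Z}_N^{+}}{\mathcal{Z}_N}.
\]
By Andr\'eief's identity each integral equals $N!$ times the Hankel determinant of the corresponding moments, equivalently $N!\prod_{j=0}^{N-1}h_j$, where $h_j$ is the squared norm of the $j$-th monic orthogonal polynomial for that weight. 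Hence
\[
\log\mathbb{P}(M_N^{(\lambda)}>0)=\sum_{j=0}^{N-1}\log h_j^{+}-\sum_{j=0}^{N-1}\log h_j,
\]
with $h_j^{+}$ the norms for $x^{\lambda}e^{-Nx^2}$ on $[0,\infty)$ and $h_j$ those for $|x|^{\lambda}e^{-Nx^2}$ on $\mathbb{R}$.

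The denominator is exactly solvable. As $|x|^{\lambda}e^{-Nx^2}$ is even, the odd moments vanish, so the Hankel matrix has a checkerboard structure and its determinant factorizes into even- and odd-index blocks; under $x\mapsto x^2$ these become two \emph{classical} Laguerre Hankel determinants with parameters $\tfrac{\lambda-1}{2}$ and $\tfrac{\lambda+1}{2}$. I would therefore write $\mathcal{Z}_N$ as an explicit product of Gamma functions and expand $\sum_j\log h_j$ by Stirling; the standard asymptotics of such Gamma products produce Barnes $G$-values, and these account precisely for the factors $G(\tfrac{\lambda+3}{2})$ and $G(\tfrac{\lambda+1}{2})$ in \eqref{asymp_PM}.

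The numerator carries the real content: the half-line weight $x^{\lambda}e^{-Nx^2}$ is semi-classical (its moments $\tfrac12 N^{-(j+\lambda+1)/2}\Gamma(\tfrac{j+\lambda+1}{2})$ admit no parity reduction), so no closed form exists, and I would instead use the large-degree asymptotics of the associated semi-classical Laguerre polynomials and recurrence coefficients supplied by the Deift--Zhou steepest-descent analysis of their Riemann--Hilbert problem. The relevant equilibrium measure is supported on an interval $[0,b]$ with a soft edge at $b$ and a hard edge at the origin, the latter reflecting both the positivity constraint and the $x^{\lambda}$ factor; its local parametrix is of Bessel/confluent-hypergeometric type and is exactly the source of the Fisher--Hartwig constant $G(\lambda+1)$ appearing in \eqref{asymp_PM}. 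From the RH output I would extract a full asymptotic expansion of $\log h_n^{+}$, uniform in $n\le N-1$, and sum it.

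Assembling the two pieces and summing over $j$ by Euler--Maclaurin produces the orders $N^2$, $N$, $\log N$ with their coefficients; the $\lambda$-independent part must collapse to $-c_1N^2+c_2\log N+c_3$, which both checks against \eqref{loopresult} and anchors the constant at $\lambda=0$. The genuinely delicate point — and the step I expect to be the main obstacle — is the $\mathcal{O}(1)$ term: Euler--Maclaurin fixes it only up to an additive constant, so extracting the exact Barnes $G$ combination requires controlling the RH asymptotics uniformly down to small degree and near the hard edge, and tracking every $\mathcal{O}(1)$ contribution of the origin parametrix. A clean way to finish is a differential identity in $\lambda$, integrating $\partial_\lambda\log\mathbb{P}=\langle\sum_i\log x_i\rangle_{+}-\langle\sum_i\log|x_i|\rangle$ from $0$ to $\lambda$: the linear statistics of $\log x$ are computable from the one-point functions delivered by the RH analysis, their $\lambda$-integral generates precisely the $\lambda^2$ terms and the Barnes $G$ ratio, and the integration constant is pinned down by \eqref{loopresult}.
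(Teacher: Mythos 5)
Your reduction of the probability to a ratio of Hankel determinants and your treatment of the denominator (checkerboard structure, reduction under $x\mapsto x^2$ to two Laguerre-type blocks with parameters $\tfrac{\lambda\mp 1}{2}$, Barnes $G$ asymptotics) are sound and essentially coincide with the paper's Lemma \ref{lem:ApB}. The gap is in the numerator, and you have in fact located it yourself: everything in Theorem \ref{Th1} beyond the leading Dean--Majumdar term lives in the $\log N$ and $\mathcal{O}(1)$ coefficients, and summing $\log h_n^{+}$ by Euler--Maclaurin is structurally incapable of producing the $\mathcal{O}(1)$ constant. Indeed, by exact scaling $h_{n,N}^{+}=N^{-n-(\lambda+1)/2}\tilde h_n$, where $\tilde h_n$ are the norms for the fixed weight $y^{\lambda}e^{-y^{2}}$ on $[0,\infty)$, so your problem is exactly the classical one of determining the multiplicative constant in Szeg\H{o}/Hankel-type asymptotics: the large-$n$ expansion of $\log\tilde h_n$ fixes the sum only up to an additive constant which the expansion itself cannot supply. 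Thus your primary route does not prove the theorem as stated.

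Your fallback --- integrating $\partial_\lambda\log\mathbb{P}=\mathbb{E}_{+}\left[\sum_i\log x_i\right]-\mathbb{E}\left[\sum_i\log|x_i|\right]$ from $0$ to $\lambda$ and anchoring at $\lambda=0$ via \eqref{loopresult} --- is logically coherent (and citing \eqref{loopresult} is admissible, though it makes your proof conditional on the very result the paper re-derives), but it replaces the original problem by one at least as hard, and you leave that step as an assertion. The statistic $\log x$ is singular precisely at the hard edge, so computing $\mathbb{E}_{+}\left[\sum_i\log x_i\right]=\partial_\lambda \log Z_N^{+}$ to accuracy $o(1)$, uniformly in $\lambda$, is \emph{not} a consequence of knowing the limiting one-point density: the density only yields the $\mathcal{O}(N)$ term, while the $\mathcal{O}(1)$ term requires the full subleading Riemann--Hilbert analysis with the Bessel parametrix and careful error control near the origin --- which is exactly where the $G(\lambda+1)$-type constant would have to be extracted. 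Saying the $\lambda$-integral ``generates precisely the $\lambda^2$ terms and the Barnes $G$ ratio'' presupposes the answer. By contrast, the paper's deformation is engineered to avoid both traps: it deforms the potential in $s$, $V(x;s)=x+s(x^{2}-x)$, at fixed $\lambda$, so that $Z_N'(s)/Z_N(s)=-N\,\mathbb{E}_V\left[\sum_j(x_j^{2}-x_j)\right]$ is a \emph{polynomial} linear statistic which, via Christoffel--Darboux, collapses to a finite expression in the recurrence coefficients $\alpha_N,\beta_N$ (Lemma \ref{lem:def}); their $1/N$ expansions come algebraically from the string equations (Proposition \ref{prop:strings}), with the Riemann--Hilbert analysis needed only for existence and uniformity in $s$ (Proposition \ref{prop:1overN}); and all constants ($\zeta'(-1)$ and the Barnes $G$ values) enter through the two exactly solvable endpoints $Z_N(0)=Z_N^{\mathrm{LUE}}$ and $Z_N^{\mathrm{gGUE}}$, so no summation over all degrees and no singular linear statistic ever appears.
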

We note that in the case $\lambda=0$ we immediately recover the result \eqref{loopresult} of \cite{BEMN11} as a special case. We also mention the work \cite{B14} where the dependence of the leading term $c_{1}$ on growing $\lambda \sim N$ is investigated.

Figure \ref{Fig1} illustrates the accuracy of the asymptotic expansion 
\eqref{asymp_PM} for increasing $N$ and several values of 
$\lambda$. The comparison has been made with respect to brute force calculation of the 
Hankel determinant expression for the partition functions, see Appendix \ref{ApA}, which is quite time consuming and 
needs a large number of digits in {\sc Maple}.


\begin{figure}[!tb]
\centering
\includegraphics[width=70mm,height=62mm]{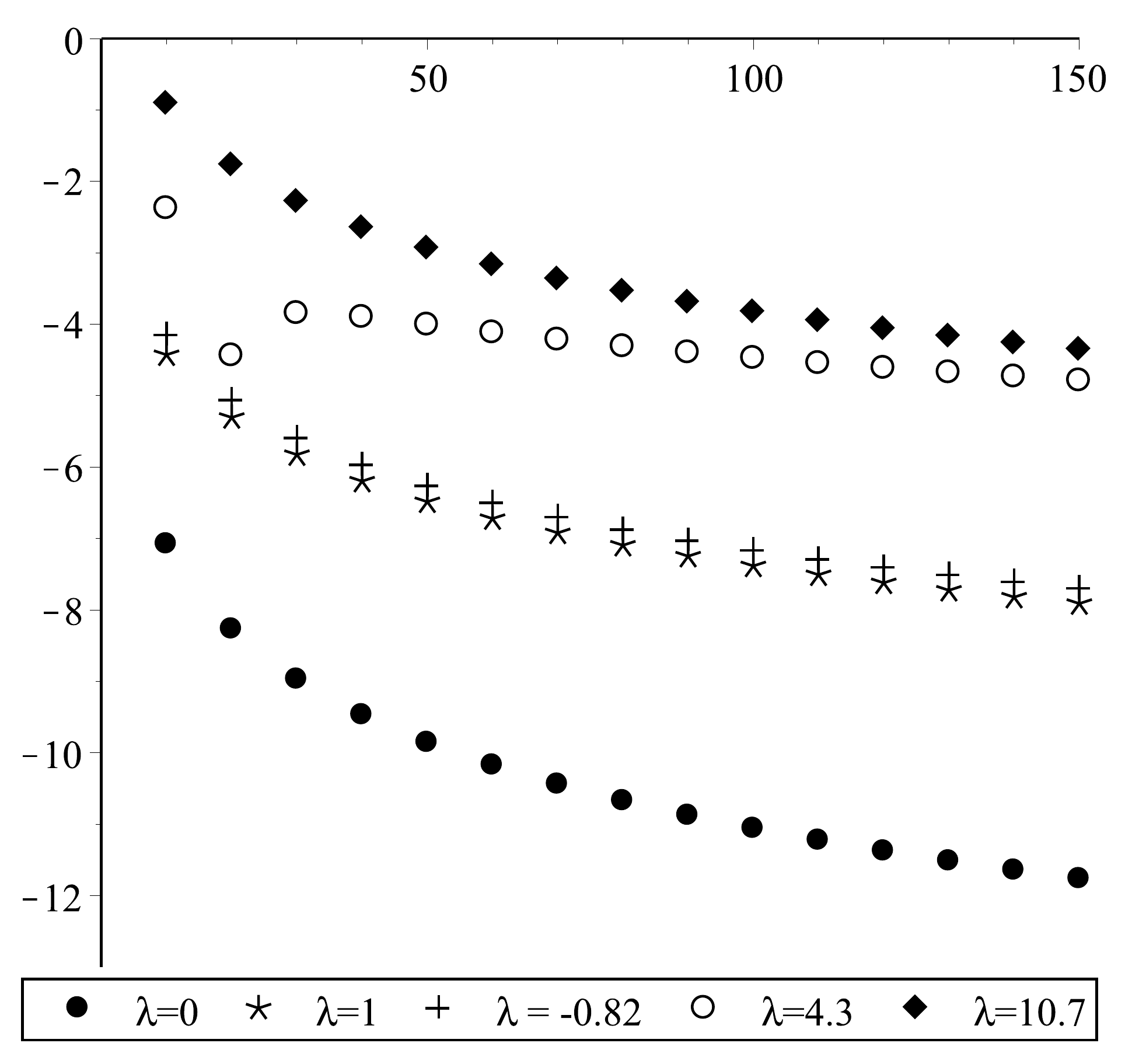}
\caption{Absolute errors (in $\log_{10}$ scale) as a function of $N$ and for different values of $\lambda$, taking 
all the terms in \eqref{asymp_PM} up to order $\mathcal{O}(1)$ (included).}
\label{Fig1}
\end{figure}

To prove Theorem \ref{Th1} we will study the partition function:
\begin{equation}
Z_{N}(s) = \int_{0}^{\infty}\ldots \int_{0}^{\infty}\prod_{j=1}^{N}w(x_j;\lambda,s)\,\prod_{1 \leq j < k \leq N}(x_{k}-x_{j})^{2}\,dx_{1}\ldots dx_{N} \label{zns}
\end{equation} 
where 
\begin{equation}\label{ws}
w(x;\lambda,s) = x^{\lambda}e^{-N(x+s(x^2-x))}.
\end{equation}

Note that this deformed weight interpolates between the classical Laguerre weight if $s=0$ and the generalized GUE if $s=1$. Diagonalizing $M_{N}$ in \eqref{dP_gGUE} and integrating out the eigenvectors (see \textit{e.g.} \cite{AGZ,Forrester,Mehta}) we see that
\begin{equation}
\begin{split}
\log\mathbb{P}(M^{(\lambda)}_{N}>0) &= \log\left(\frac{Z_{N}(1)}{Z_{N}^{\mathrm{gGUE}}}\right) \\
&= \int_{0}^{1}\frac{Z_{N}'(s)}{Z_{N}(s)}\,ds+\log Z_{N}(0)-\log Z^{\mathrm{gGUE}}_N.
\end{split} \label{s1}
\end{equation}
As the quantities $Z_{N}(0)=Z^{\textrm{LUE}}_{N}$ and $Z_{N}^{\mathrm{gGUE}}$ turn out to have explicit evaluations in terms of Gamma functions (see Lemmas \ref{lem:ApA} and \ref{lem:ApB}), our main task is to compute the integrand in \eqref{s1}. 
\begin{enumerate}
	\item We write $Z'_{N}(s)/Z_{N}(s)$ in terms of the recurrence coefficients 
	of a suitable family of 
	semiclassical Laguerre polynomials, orthogonal with respect to $w(x;\lambda,s)$ on $x \in [0,\infty)$.
	\item We compute the first terms in the asymptotic expansion of these recurrence coefficients 
	as $N\to\infty$, using the corresponding Riemann--Hilbert problem and the Deift--Zhou method of steepest descent.
	\item We show that such asymptotic expansions are uniform in $s\in[0,1]$ and we integrate term by term in \eqref{s1}.
\end{enumerate}

\section{Proof of Theorem \ref{Th1}}
Semi-classical Laguerre orthogonal polynomials (OPs): 
$\pi_{n,N}(x)=\pi_{n,N}(x;\lambda,s)$ are defined by the orthogonality
\begin{equation}\label{orthogonality}
\int_0^{\infty} \pi_{n,N}(x) x^k w(x;\lambda,s)dx=0, \qquad k=0,1,2,\ldots,n-1
\end{equation}
and the normalization
\begin{equation}\label{hnN}
\int_0^{\infty} \pi^2_{n,N}(x) w(x;\lambda,s)dx=h_{n,N}(\lambda,s)\neq 0, \qquad n\geq 0,
\end{equation}
where we write the weight function \eqref{ws} as
\begin{equation}
w(x;\lambda,s)=x^{\lambda}e^{-NV(x;s)}, \qquad \lambda>-1
\end{equation}
and $N>0$ is a real parameter. Here the potential is 
\begin{equation}\label{Vs}
V(x;s) = x+s(x^{2}-x),
\end{equation}
constructed in such a way that $V(x;0) = V(x) = x$ corresponds to the classical Laguerre OPs, while 
$V(x;1)=x^{2}$ is the potential that we are interested in. 

\begin{remark}
The deformation \eqref{Vs} follows a similar idea as the construction by Bleher and Its in \cite{BI}. The quantities considered here were also recently investigated in the complementary regime of fixed $N$ and large parameters by Clarkson and Jordaan \cite{CJ}. Part of this interest stems from the fact that the recurrence coefficients for semi-classical Laguerre polynomials, with weight $w(x;\lambda,t)=x^{\lambda}\exp(-x^2+tx)$, satisfy deformation equations (in $t$) that are closely related to the Painlev\'e IV differential equation \cite{BvA,FvAZ12,CJ}. Different aspects of this relationship were also studied in \cite{FW, WF12}.
\end{remark}

Since the weight function \eqref{ws} is positive and integrable on $[0,\infty)$ for $s\in[0,1]$, it follows from general theory \cite{Chihara, Ismail}, that the orthogonal polynomials $\pi_{n,N}(x)$ exist uniquely for all $n\geq 0$ and $s\in[0,1]$, 
and they satisfy $\deg\pi_{n,N}=n$. Furthermore, they are solutions of a three term recurrence relation (written in monic form):
\begin{equation}\label{TTRRs}
x\pi_{n,N}(x)=\pi_{n+1,N}(x)+\alpha_{n,N}\pi_{n,N}(x)+
\beta_{n,N} \pi_{n-1,N}(x),
\end{equation}
with initial data $\pi_{-1,N}(x)=0$, $\pi_{0,N}(x)=1$, and with recurrence coefficients $\alpha_{n,N}=\alpha_{n,N}(\lambda,s)$ and $\beta_{n,N}=\beta_{n,N}(\lambda,s)$.

\begin{remark}
For brevity, in the sequel, we will write $\alpha_n$ and $\beta_n$ instead of $\alpha_{n,N}$ and $\beta_{n,N}$, if no confusion arises, and similarly $h_n$ instead of $h_{n,N}$ in \eqref{hnN}. We will also use $\pi_n(x)$ instead of $\pi_{n,N}(x)$ for the $N$-dependent orthogonal polynomials and $w(x)$ instead of $w(x;\lambda,s)$ for the weight function \eqref{ws}.
\end{remark}

Next, we write $Z'_{N}(s)/Z_{N}(s)$, where the derivative is taken with respect to $s$, in terms of these recurrence 
coefficients. 


\begin{lemma}
\label{lem:def}
We have the following deformation equation
\begin{equation}
\frac{Z'_{N}(s)}{Z_{N}(s)} = \beta_{N}c_{N,\lambda}(s)-N^{2}\left[(1-3s)E_{N}+2sF_{N}\right], 
\label{def-formula}
\end{equation}
where
\begin{align}
c_{N,\lambda}(s) &:= N^2(3-s)+\lambda N\\ 
E_{N} &:= \beta_{N}(\alpha_{N}+\alpha_{N-1}) \label{EN}\\
F_{N} &:= \beta_{N}(\beta_{N+1}+\beta_N+\beta_{N-1}+\alpha_{N}^{2}+\alpha_{N}\alpha_{N-1}+\alpha_{N-1}^{2}) \label{FN}
\end{align}
\end{lemma}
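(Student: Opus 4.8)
The plan is to reduce the logarithmic derivative to a sum over the orthogonal-polynomial levels and then to collapse that sum into boundary data using the Freud (ladder-operator) relations for the semi-classical weight $w$. First I would use the standard Heine--Andr\'eief identity to write the partition function \eqref{zns} as a product of the norms, $Z_N(s)=N!\prod_{n=0}^{N-1}h_{n,N}(\lambda,s)$, so that $Z_N'(s)/Z_N(s)=\sum_{n=0}^{N-1}\partial_s h_n/h_n$. Differentiating \eqref{hnN} under the integral sign, the contribution of $\partial_s\pi_n$ drops out: since $\pi_n$ is monic its $s$-derivative has degree $\le n-1$, so $\int_0^\infty \pi_n(\partial_s\pi_n)\,w\,dx=0$ by orthogonality. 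Only the explicit $s$-dependence of the weight survives, and since $\partial_s w=-N(x^2-x)w$ this gives
\[
\frac{\partial_s h_n}{h_n}=-N\big(\langle x^2\rangle_n-\langle x\rangle_n\big),\qquad \langle f\rangle_n:=\frac{1}{h_n}\int_0^\infty f(x)\,\pi_n^2\,w\,dx.
\]

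Next I would evaluate these moments with the three-term recurrence \eqref{TTRRs}. Using $x\pi_n=\pi_{n+1}+\alpha_n\pi_n+\beta_n\pi_{n-1}$, orthogonality, and $\beta_n=h_n/h_{n-1}$, one finds $\langle x\rangle_n=\alpha_n$ and $\langle x^2\rangle_n=\beta_{n+1}+\alpha_n^2+\beta_n$, hence
\[
\frac{Z_N'(s)}{Z_N(s)}=-N\sum_{n=0}^{N-1}\big(\beta_{n+1}+\beta_n+\alpha_n^2-\alpha_n\big).
\]
The task is now to collapse this sum into the boundary coefficients $\alpha_N,\alpha_{N-1},\beta_{N+1},\beta_N,\beta_{N-1}$ entering $E_N$ and $F_N$. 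For this I would derive the Laguerre--Freud (string) equations for $w$, exploiting the Pearson relation $x\,w'/w=\lambda-NxV'(x;s)$; equivalently one uses the Chen--Ismail ladder operators, whose coefficients here take the form $A_n(x)=a_n/x+2Ns$ and $B_n(x)=b_n/x$, and the compatibility conditions then produce the closed algebraic relations among $\alpha_n,\beta_n,a_n,b_n$ together with the required partial-sum identities. These same relations also follow by integrating $\frac{d}{dx}[x^k\pi_n^2 w]$ and $\frac{d}{dx}[x^k\pi_n\pi_{n-1}w]$ over $[0,\infty)$, where one must check that the boundary contributions vanish at $x=0$ (this is where $\lambda>-1$ enters) and at $x=\infty$ (guaranteed by the exponential/Gaussian decay for $s\in[0,1]$).

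The heart of the argument is the telescoping, and I expect to establish two sum rules. The first-moment rule $\sum_{n=0}^{N-1}\alpha_n=2NsE_N+N(1-s)\beta_N$ follows directly from the $1/x$-coefficient of the ladder compatibility and telescopes because $E_n=\beta_n(\alpha_n+\alpha_{n-1})$ with $E_0=0$. The second-moment rule $\sum_{n=0}^{N-1}(\beta_{n+1}+\beta_n+\alpha_n^2)=2NsF_N+N(1-s)E_N-(2N+\lambda)\beta_N$ is the one in which the quantity $F_n$ of \eqref{FN} appears as the natural ``level-two'' boundary object (again $F_0=E_0=\beta_0=0$). Subtracting the two rules, multiplying by $-N$, and collecting the coefficients of $E_N$, $F_N$ and $\beta_N$ yields
\[
\frac{Z_N'(s)}{Z_N(s)}=\beta_N\big(N^2(3-s)+\lambda N\big)-N^2\big[(1-3s)E_N+2sF_N\big],
\]
which is exactly \eqref{def-formula}. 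The main obstacle is precisely the second sum rule: unlike the first moment, the second moment does not telescope from a single ladder identity, so I expect to need a genuinely second-order Freud relation (or a carefully chosen combination of the compatibility conditions) in order to exhibit $F_n$ as a discrete antidifference with exactly the right coefficients. A useful consistency check throughout is the classical Laguerre limit $s=0$, where $\alpha_n=(2n+\lambda+1)/N$ and $\beta_n=n(n+\lambda)/N^2$ make both sum rules fully explicit.
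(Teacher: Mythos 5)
Your reduction is correct and follows a genuinely different route from the paper: there is no sum over levels in the paper's proof. The paper integrates the linear statistic $-N\sum_j(x_j^2-x_j)$ against the Christoffel--Darboux \emph{closed form} of the density, $\rho_N=w(\pi_N'\pi_{N-1}-\pi_N\pi_{N-1}')/h_{N-1}$, and a single integration by parts produces the cross-moments $h_{N-1}^{-1}\int_0^\infty x^k\pi_N\pi_{N-1}w\,dx$, $k=2,3$, which become \eqref{EN} and \eqref{FN} via the recurrence \eqref{TTRRs}; no ladder operators or string equations are needed. You instead use the Heine factorization $Z_N=N!\prod_{n<N}h_n$ and the sum form of the kernel, arriving at $Z_N'/Z_N=-N\sum_{n=0}^{N-1}\bigl(\beta_{n+1}+\beta_n+\alpha_n^2-\alpha_n\bigr)$, which is correct, and your first sum rule is also correct and provable exactly as you indicate: it is the $1/x$-coefficient of the ladder compatibility condition $(S_2')$, or, more elementarily, it follows from $0=\int_0^\infty\frac{d}{dx}\bigl[x\,\pi_N\pi_{N-1}w\bigr]dx$ once one notes that $\sum_{k=0}^{N-1}\alpha_k$ equals minus the subleading coefficient of the monic $\pi_N$ (the factor $x$ makes the boundary term at $0$ vanish for all $\lambda>-1$).

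The genuine gap is exactly where you flag it: the second sum rule is asserted, checked only at $s=0$, and never derived, so the proof stops at its crux. However, you do not need any ``genuinely second-order Freud relation'' to close it; the string equations \eqref{string_Vs2} of Proposition \ref{prop:strings} (which the paper quotes from Boelen--Van Assche) suffice. Sum the first string equation over $n=0,\dots,N-1$: the right-hand side telescopes arithmetically to give
\[
2s\,\Sigma_{2}+(1-s)\,\Sigma_{1}=N+\lambda,\qquad
\Sigma_{1}:=\sum_{n=0}^{N-1}\alpha_{n},\quad
\Sigma_{2}:=\sum_{n=0}^{N-1}\bigl(\beta_{n+1}+\beta_{n}+\alpha_{n}^{2}\bigr).
\]
Substituting your first rule $\Sigma_1=2NsE_N+N(1-s)\beta_N$ expresses $Z_N'/Z_N=-N\Sigma_2+N\Sigma_1$ entirely through $E_N$ and $\beta_N$, and comparison with \eqref{def-formula} then reduces to the single \emph{local} identity
\[
4Ns^{2}F_{N}+4Ns(1-s)E_{N}+\bigl[N(1-s)^{2}-2s(2N+\lambda)\bigr]\beta_{N}=N+\lambda,
\]
which holds because the first equation of \eqref{string_Vs2} at $n=N$ and at $n=N-1$ eliminates $\beta_{N+1}+\beta_N+\alpha_N^2$ and $\beta_N+\beta_{N-1}+\alpha_{N-1}^2$ from $F_N$, after which the second equation of \eqref{string_Vs2} at $n=N$ (i.e.\ $q=1$) evaluates the remaining product $\beta_N(2s\alpha_N+1-s)(2s\alpha_{N-1}+1-s)$; everything cancels to $N+\lambda$. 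This chain is precisely equivalent to your second sum rule, so it proves it. The division by $2s$ is harmless: all quantities are continuous (indeed real analytic) in $s\in[0,1]$, and the case $s=0$ is explicit from the Laguerre values. The trade-off versus the paper is clear: the paper's argument is self-contained (CD formula plus integration by parts), whereas yours leans on the string equations --- but since the paper invokes Proposition \ref{prop:strings} immediately afterwards anyway, your route is a legitimate, and arguably more unified, alternative once this step is filled in.
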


\begin{proof}
Differentiating \eqref{zns} shows that
\begin{equation}
\frac{Z'_{N}(s)}{Z_{N}(s)} = -N\mathbb{E}_{V}\left[\sum_{j=1}^{N}(x_{j}^{2}-x_{j})\right] \label{linstat}
\end{equation}
where $\mathbb{E}_{V}$ denotes expectation with respect to the joint probability density function proportional to
\begin{equation}
\prod_{j=1}^{N}x_{j}^{\lambda}e^{-NV(x_{j},\lambda,s)}\prod_{1 \leq j < k \leq N}(x_{k}-x_{j})^{2} \label{pdfV}
\end{equation}
and $x_{j} \in [0,\infty)$, $j=1,\ldots,N$. The latter expectation can be written 
\begin{equation}
-N\mathbb{E}_{V}\left[\sum_{j=1}^{N}(x_{j}^{2}-x_{j})\right] = -N\int_{0}^{\infty}(x^{2}-x)\rho_{N}(x)\,dx \label{intdensity}
\end{equation}
where $\rho_{N}(x)$ is the so-called `one-point correlation function' or `eigenvalue density' corresponding to \eqref{pdfV}, see \textit{e.g.} \cite{AGZ,Mehta} for definitions and basic properties of this quantity. The equality \eqref{intdensity} appears frequently in random matrix theory, appropriate references include \cite[Eqn. 1.1.20 and Eqn. 1.1.41]{PS11}, see also \cite[Eqn. 1.8]{ZCD} where it was used for a similar purpose. In particular it is known that $\rho_{N}(x)$ can be computed explicitly by means of the Christoffel-Darboux formula:
\begin{equation}
\rho_{N}(x) = w(x)\,\frac{\pi_{N}'(x)\pi_{N-1}(x)-\pi_{N}(x)\pi_{N-1}'(x)}{h_{N-1}} \label{cd}
\end{equation}
Inserting \eqref{cd} into \eqref{intdensity} yields four different contributions which can all be written in terms of the recurrence coefficients $\alpha_{N}$ and $\beta_{N}$. One term vanishes due to
\begin{equation}
\int_{0}^{\infty}x\pi_{N-1}'(x)\pi_{N}(x)w(x)\,dx=0,
\end{equation}
a consequence of orthogonality. So \eqref{intdensity} can be decomposed as $I = I_{1}+I_{2}+I_{3}$, where
\begin{align}
I_{1} &:= \frac{N}{h_{N-1}}\int_{0}^{\infty}x^{2}\pi_{N-1}'(x)\pi_{N}(x)w(x)\,dx\\
I_{2} &:= -\frac{N}{h_{N-1}}\int_{0}^{\infty}x^{2}\pi_{N}'(x)\pi_{N-1}(x)w(x)\,dx\\
I_{3} &:= \frac{N}{h_{N-1}}\int_{0}^{\infty}x\pi_{N}'(x)\pi_{N-1}(x)w(x)\,dx
\end{align}
First observe that $I_{1} = N(N-1)\beta_{N}$ (as a consequence of $h_{N}/h_{N-1} = \beta_{N}$). An exercise in integration by parts shows that
\begin{align}
I_{2} &= N(N+1+\lambda)\beta_{N}-N^{2}(1-s)E_{N}-2sN^{2}F_{N}\\
I_{3} &= (1-s)N^{2}\beta_{N}+N^{2}2sE_{N}
\end{align}
where
\begin{align}
E_{N} &:= \frac{1}{h_{N-1}}\int_{0}^{\infty}\pi_{N}(x)\pi_{N-1}(x)x^{2}w(x)\,dx\\
F_{N} &:= \frac{1}{h_{N-1}}\int_{0}^{\infty}\pi_{N}(x)\pi_{N-1}(x)x^{3}w(x)\,dx
\end{align}
Combining all these terms yields \eqref{def-formula}. Finally the identities \eqref{EN} and \eqref{FN} follow from the three term recurrence relation \eqref{TTRRs}.
\end{proof}
The recurrence coefficients in Lemma \ref{lem:def} can be computed by solving the following coupled system of recurrence relations in the limit $N \to \infty$.
\begin{proposition}[String equations]
\label{prop:strings}
Let $n/N=q$ and $s\in[0,1]$, then the recurrence coefficients $\alpha_{n}$ and $\beta_{n}$ in \eqref{TTRRs} (omitting $N$ for brevity) satisfy 
\begin{equation}\label{string_Vs2}
\begin{aligned}
2s(\beta_{n+1}+\beta_{n}+\alpha_{n}^2)+(1-s)\alpha_{n}&=2q+\frac{\lambda+1}{N},\\
\beta_{n}\left(2s\alpha_{n}+1-s\right)\left(2s\alpha_{n-1}+1-s\right)
&=(2s\beta_{n}-q)\left(2s\beta_{n}-q-\frac{\lambda}{N}\right),
\end{aligned}
\end{equation}
with the values at $s=0$ corresponding to the (scaled with $N$) Laguerre 
polynomials:
$$
\alpha_{n}(0)=2q+\frac{\lambda+1}{N}, \qquad 
\beta_{n}(0)=q\left(q+\frac{\lambda}{N}\right).
$$
\end{proposition}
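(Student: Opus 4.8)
The plan is to derive the two string equations from a handful of integration-by-parts identities for the semiclassical Laguerre polynomials, supplemented by the ladder-operator compatibility relations of Chen and Ismail (see \cite{Ismail}, and \cite{CJ,FvAZ12} for the semiclassical Laguerre case specifically). The key simplifying feature of the weight \eqref{ws} is that, writing $u(x)=-\log w(x)=-\lambda\log x+NV(x;s)$, the divided difference of $u'$ is rational of a very simple shape,
\[
\frac{u'(x)-u'(y)}{x-y}=\frac{\lambda}{xy}+2Ns .
\]
Consequently the Chen--Ismail functions reduce to $A_n(x)=R_n/x+2Ns$ and $B_n(x)=r_n/x$, where I introduce the scalar auxiliary quantities
\[
R_n=\frac{\lambda}{h_n}\int_0^\infty\frac{\pi_n(x)^2}{x}\,w(x)\,dx,\qquad
r_n=\frac{\lambda}{h_{n-1}}\int_0^\infty\frac{\pi_n(x)\pi_{n-1}(x)}{x}\,w(x)\,dx .
\]

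First I would extract the \emph{linear} relations by elementary integration by parts. Differentiating $\pi_n^2w$, $\pi_n\pi_{n-1}w$ and $\pi_n^2 xw$ and integrating over $[0,\infty)$, the boundary terms vanish and orthogonality together with the three-term recurrence \eqref{TTRRs} and $h_n/h_{n-1}=\beta_n$ turn the three identities $\int_0^\infty(\,\cdot\,)'\,dx=0$ into $R_n=N(2s\alpha_n+1-s)$, into $r_n=2Ns\beta_n-n$, and into the first string equation $(1-s)\alpha_n+2s(\beta_{n+1}+\beta_n+\alpha_n^2)=2q+\tfrac{\lambda+1}{N}$, respectively; here I use the standard evaluations $\int_0^\infty x\pi_n^2w\,dx=\alpha_n h_n$ and $\int_0^\infty x^2\pi_n^2w\,dx=(\beta_{n+1}+\alpha_n^2+\beta_n)h_n$. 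I note the useful distinction that the identity for $\pi_n^2 xw$ carries an extra factor $x$, so its boundary term at the hard edge vanishes for every $\lambda>-1$ and the first string equation holds unconditionally, whereas $R_n,r_n$ and the two identities that produce them converge only for $\lambda>0$.

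The \emph{nonlinear} second string equation is the main obstacle, as it does not follow from any single integration by parts. For this I would invoke the Chen--Ismail sum rule $B_n^2+u'B_n+\sum_{j=0}^{n-1}A_j=\beta_nA_nA_{n-1}$, which is the genuinely nonlinear input. Substituting $A_n(x)=R_n/x+2Ns$ and $B_n(x)=r_n/x$ and matching the coefficient of $1/x^2$ yields $r_n(r_n-\lambda)=\beta_nR_nR_{n-1}$; inserting $R_n=N(2s\alpha_n+1-s)$ and $r_n=2Ns\beta_n-n$ and dividing by $N^2$ gives exactly $\beta_n(2s\alpha_n+1-s)(2s\alpha_{n-1}+1-s)=(2s\beta_n-q)(2s\beta_n-q-\tfrac{\lambda}{N})$. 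Since this argument uses the quantities $R_n,r_n$, it is valid a priori only for $\lambda>0$; but both sides are polynomials in the recurrence coefficients and in $\lambda/N$, while $\alpha_n,\beta_n$ depend analytically on $\lambda\in(-1,\infty)$ (being ratios of Hankel determinants in the moments $\int_0^\infty x^{k+\lambda}e^{-NV}dx$), so the identity persists for all $\lambda>-1$ by analytic continuation. The technical care lies precisely in setting up the ladder relations and checking the vanishing of the hard-edge boundary terms, which is why I first restrict to $\lambda>0$ before continuing.

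Finally, the initial values at $s=0$ fall out of the equations themselves: putting $s=0$ in the first string equation gives $\alpha_n(0)=2q+\tfrac{\lambda+1}{N}$, and in the second gives $\beta_n(0)(1)(1)=(-q)(-q-\tfrac{\lambda}{N})=q\bigl(q+\tfrac{\lambda}{N}\bigr)$. As a consistency check, these coincide with the recurrence coefficients $2n+\lambda+1$ and $n(n+\lambda)$ of the monic classical Laguerre polynomials for the weight $x^\lambda e^{-x}$, rescaled under $x\mapsto Nx$ (hence divided by $N$ and $N^2$), which is exactly the weight $x^\lambda e^{-Nx}=w(x;\lambda,0)$ obtained at $s=0$.
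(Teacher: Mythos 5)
Your proof is correct, but it takes a genuinely different route from the paper, whose entire proof of Proposition \ref{prop:strings} is a one-line citation: the string equations are taken from \cite[Theorem 1.1]{BvA} and \cite{FvAZ12}, ``adapting the potential'' to the present one. Since those references treat the weight $x^{\alpha}e^{-x^2+tx}$, the paper implicitly leaves to the reader the rescaling $x\mapsto x/\sqrt{Ns}$ with $t=-(1-s)\sqrt{N/s}$ that transports their equations into \eqref{string_Vs2}. You instead rerun the ladder-operator machinery underlying those references directly on the weight \eqref{ws}: the divided difference $\lambda/(xy)+2Ns$, the resulting forms $A_n(x)=R_n/x+2Ns$ and $B_n(x)=r_n/x$, the linear identities $R_n=N(2s\alpha_n+1-s)$ and $r_n=2Ns\beta_n-n$ together with the first string equation by integration by parts, and the second (nonlinear) equation from the Chen--Ismail sum rule by matching $x^{-2}$ coefficients. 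I verified these computations, including the constants and the $s=0$ data; they are all correct. Your route buys three things the citation does not: uniformity in $s\in[0,1]$ (the implicit rescaling degenerates at $s=0$, whereas your derivation covers the Laguerre endpoint within the same argument); native handling of the $N$-dependent exponent and of $q=n/N$, so no bookkeeping of the change of variables is needed; and an explicit account of the hard-edge subtlety, namely that $R_n$, $r_n$ and the ladder relations only make sense for $\lambda>0$, repaired by a sound analytic-continuation argument (both sides of \eqref{string_Vs2} are analytic in $\lambda$ in a neighbourhood of $(-1,\infty)$ because the Hankel determinants are positive there, and they agree on $(0,\infty)$, hence everywhere by the identity theorem). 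This last point is glossed over entirely by the paper's citation. The only ingredient you treat as a black box is the sum rule $B_n^2+u'B_n+\sum_{j=0}^{n-1}A_j=\beta_nA_nA_{n-1}$ itself; its derivation from the two ladder compatibility conditions requires precisely the boundary-term checks you already describe for $\lambda>0$, so this is a presentational shortcut rather than a mathematical gap.
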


\begin{proof}
The string equations are known from \cite[Theorem 1.1]{BvA}, \cite{FvAZ12}
adapting the potential $V(x;s)=x^2-sx$ to the present one.
\end{proof}
We remark in passing that Boelen and Van Assche \cite{BvA} have shown that \eqref{string_Vs2} can be obtained from an asymmetric discrete Painlev\'e IV equation by a limiting process. 

To solve this system of equations asymptotically as $N\to\infty$, we exploit the following fact, the proof of which is postponed to the next section.
\begin{proposition}
\label{prop:1overN}
Let $n/N=q$. 
For any $\lambda >-1$,  the recurrence coefficients $\alpha_{n}$ and $\beta_{n}$ in \eqref{TTRRs} (omitting $N$ for brevity) have asymptotic expansions in inverse powers of $N$, as $N\to\infty$:
\begin{equation}\label{asympanbn}
\begin{aligned}
\alpha_n&=\alpha_{n}(q,\lambda,s)\sim \sum_{k=0}^{\infty} f_k(q,\lambda,s) N^{-k},\\
\beta_n&=\beta_{n}(q,\lambda,s)\sim \sum_{k=0}^{\infty} g_k(q,\lambda,s) N^{-k}.
\end{aligned}
\end{equation}
The coefficients $f_k(q,\lambda,s)$ and $g_k(q,\lambda,s)$ 
are real analytic functions of $s\in[0,1]$, and these expansions hold uniformly for $s$ in this interval.
\end{proposition}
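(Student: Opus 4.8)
The plan is to substitute the ansatz \eqref{asympanbn} into the string equations \eqref{string_Vs2} and solve for the coefficients $f_k,g_k$ recursively, after which I would upgrade the resulting formal series to a genuine, uniform asymptotic expansion. The first issue is that \eqref{string_Vs2} couples $\alpha_n,\beta_n$ to the neighbouring coefficients $\alpha_{n\pm1},\beta_{n\pm1}$, which amounts to shifting $q=n/N$ to $q\pm N^{-1}$. I would therefore write $\beta_{n\pm1}\sim\sum_k g_k(q\pm N^{-1})N^{-k}$, Taylor expand each $g_k(q\pm N^{-1})$ in powers of $N^{-1}$, and likewise for $\alpha_{n\pm1}$, so that the shifts become formal series in $N^{-1}$ whose coefficients involve the $q$-derivatives of the $f_k,g_k$. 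Collecting equal powers of $N^{-1}$ in \eqref{string_Vs2} then yields a hierarchy of equations for the pairs $(f_k,g_k)$.

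At leading order I expect the nonlinear algebraic system
\begin{equation*}
2s\bigl(2g_0+f_0^2\bigr)+(1-s)f_0=2q,\qquad g_0\bigl(2sf_0+1-s\bigr)^2=\bigl(2sg_0-q\bigr)^2,
\end{equation*}
which fixes $(f_0,g_0)$ as functions of $(q,s)$, independent of $\lambda$ since the latter enters only through $\lambda/N$. I would select the branch that reduces to the Laguerre values $f_0=2q$, $g_0=q^2$ at $s=0$ and extend it real-analytically across $s\in[0,1]$ by the implicit function theorem: the Jacobian $J(q,s)$ of this system equals the identity at $s=0$, and the task is to verify that $\det J(q,s)$ stays nonzero on $[0,1]$, which simultaneously guarantees that the solution remains real with $g_0>0$. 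At each order $k\ge1$ the equations are linear in $(f_k,g_k)$, governed by the same matrix $J(q,s)$ with an inhomogeneity built from the already-determined lower-order data and its $q$-derivatives; non-vanishing of $\det J$ then determines $(f_k,g_k)$ uniquely as real-analytic functions of $s$, so analyticity propagates through the whole hierarchy.

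This argument only produces a formal solution, however, and the hard part will be to prove that the true recurrence coefficients --- which exist and are positive by orthogonality --- are genuinely asymptotic to these series, \emph{uniformly} in $s\in[0,1]$. For this I would carry out the Deift--Zhou steepest-descent analysis of the Riemann--Hilbert problem for $\pi_{n,N}$, from which $\alpha_n,\beta_n$ are extracted via the $1/N$ expansion of the solution $R\sim I+\sum_{k\ge1}R_k N^{-k}$; the coefficients $R_k$ depend analytically on $(q,s)$ through the equilibrium measure, the associated $g$-function and the conformal maps at the edges, and matching this expansion with the formal one from the strings identifies the two. The main obstacle is precisely the uniformity over the closed interval, and in particular the behaviour at the endpoints: at $s=0$ the weight degenerates to pure Laguerre with a hard edge at the origin (requiring a Bessel parametrix), whereas for $s>0$ one must confirm that the support of the equilibrium measure remains a single interval with an unchanged edge structure, with no birth of a second cut and no detachment of the left endpoint from the hard edge at $0$. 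Establishing this structural stability of the equilibrium problem throughout $[0,1]$, and hence the uniform invertibility of the global and local parametrices that controls the $\mathcal{O}(N^{-1})$ error, is where the substantive work lies.
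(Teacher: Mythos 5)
Your proposal is correct and follows essentially the same route as the paper: Proposition \ref{prop:1overN} is proved there precisely by the Deift--Zhou steepest-descent analysis you outline, with $\alpha_n,\beta_n$ read off from the expansion of $R$ at infinity via \eqref{anbnY}, while the formal string-equation hierarchy you describe is used only afterwards (in the proof of Theorem \ref{Th1}) to compute the coefficients $f_k,g_k$. The structural stability you correctly single out as the main obstacle is settled in the paper by an explicit computation of the equilibrium measure (Lemma \ref{lemma_eq}): convexity of $V(\cdot;s)$ forces a one-cut support $[0,c(s)]$ with $c\in[\tfrac{2}{3}\sqrt{6},4]$, and the extra zero $-b/a$ of the density stays bounded away from the support for all $s\in[0,1]$, so no critical transition occurs and the Airy/Bessel parametrices retain the same structure uniformly in $s$.
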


With these ingredients in hand, we can now prove Theorem \ref{Th1}. We insert the expansions \eqref{asympanbn} into the recurrence \eqref{string_Vs2} and equate terms with equal powers of $N$. At leading order the solution consistent with the values at $s=0$ is
\begin{equation}
f_0=\frac{s-1+\Delta}{6s}, \qquad
g_0=\frac{(\Delta+s-1)\Delta}{72s^2},
\end{equation}
where $\Delta=\Delta(q,s)=\sqrt{s^2+24qs-2s+1}$. Next, we have
\begin{equation}
f_1=\frac{\lambda+1}{\Delta}, \qquad
g_1=\frac{(\Delta+s-1)\lambda}{12\Delta s}.
\end{equation}

Higher order corrections can be computed systematically in {\sc Maple}, but become quite cumbersome. If 
we substitute the terms up to order $\mathcal{O}(N^{-2})$ (included), with $n=N$ (so $q=1$) into the right hand side of 
\eqref{def-formula}, we get
\begin{equation}
\frac{Z'_{N}(s)}{Z_{N}(s)} = A(s)N^2+B(s)N+C(s)+\mathcal{O}(N^{-1}), \qquad N\to\infty ,
\label{def-formula_ABC}
\end{equation}
where
\begin{align}
A(s)&=\frac{\Delta^3(s+1)+s^4+34s^3-216s^2-34s-1}{432s^3},\\
B(s)&=\frac{\lambda(s^2-12s-1+(s+1)\Delta)}{24s^2},\\
C(s)&=\frac{\lambda^2(s+1)[s^2+6s+1+(s-1)\Delta]}{4s(s^2-10s+1)\Delta}\\
&-\frac{(s+1)^3\Delta+(s^2-1)(s^2+14s+1)}{12s(s^2-10s+1)\Delta^2},
\end{align}
Now integrating from $s=0$ to $s=1$, we get 
\begin{equation}\label{int-formula_ABC}
\begin{aligned}
\int_0^1 \frac{Z'_{N}(s)}{Z_{N}(s)} ds&=
N^2\int_0^1 A(s)ds+N\int_0^1 B(s)ds+\int_0^1 C(s)ds+
\mathcal{O}(N^{-1})\\
&=N^2\left(\frac{3}{4}-\frac{\log 6}{2}\right)
+N\left(\frac{1}{2}-\frac{\log 6}{2}\right)\lambda\\
&+\frac{\lambda^2\log (2/3)}{2}+\frac{\log 3}{8}-\frac{\log 2}{6}
+\mathcal{O}(N^{-1}).
\end{aligned}
\end{equation}
The integrals in \eqref{int-formula_ABC} are easily calculated in any computer algebra package. Combining \eqref{int-formula_ABC} with the known asymptotics for $\log Z_{N}(0)$ and $\log Z^{\mathrm{gGUE}}_{N}$ (see Lemmas \ref{lem:ApA} and \ref{lem:ApB} respectively) in \eqref{s1} completes the proof of Theorem \ref{Th1}. In the next section we prove Proposition \ref{prop:1overN}. 
\section{$1/N$ expansion for the recurrence coefficients}
The main purpose of this section is to justify the Ansatz \eqref{asympanbn} which we inserted into the string equations. This is based on the fact that the recurrence coefficients can be computed in terms of the solution of an appropriate Riemann-Hilbert problem (RHP). Then their asymptotics can be analysed very precisely using the Deift--Zhou method of steepest descent. 

\subsection{Equilibrium measure}
In the steepest descent analysis, a key role is played by the equilibrium measure $d\mu_V$, which minimizes the 
logarithmic energy
\begin{equation}
E(\nu)=\iint \log\frac{1}{|x-y|}d\nu(x)d\nu(y)+\int V(x;s)d\nu(x),
\end{equation}
over all probability measures supported on $[0,\infty)$, where the external field $V(x;s)$ is given by \eqref{Vs}. 
Such a problem has a unique solution, say $d\mu_V(x;s)$, since $w(x;\lambda,s)=x^{\lambda}e^{-NV(x;s)}$ is an admissible 
weight function in the sense of Saff and Totik \cite[Def. 1.1]{ST}. Moreover, we have the variational equations
\begin{equation}\label{variational}
\begin{aligned}
-g_+(x;s)-g_-(x;s)+V(x;s)&=\ell, \qquad x\in\textrm{supp}\,\mu_V(x;s),\\
-g_+(x;s)-g_-(x;s)+V(x;s)&\geq \ell, \qquad x\,\,\textrm{a.e. in}\, [0,\infty),\\
\end{aligned}
\end{equation}
where $g(z;s)=\int \log(z-x)d\mu_V(x;s)$ is analytic for $z\in\mathbb{C}\setminus \mathbb{R}$, and $g_{\pm}(x;s)$ indicate the boundary values
$$
g_{\pm}(x;s)=\lim_{\varepsilon\to 0^+} g(x\pm i\varepsilon;s), \qquad x\in\mathbb{R}.
$$

In our case, the support and density of the equilibrium measure can be worked out explicitly:


\begin{lemma}\label{lemma_eq}
Let $s\in[0,1]$, the equilibrium measure corresponding to the weight function 
$w(x;\lambda,s)=x^{\lambda} e^{-NV(x;s)}$, with $V(x;s)$ given by \eqref{Vs}, is supported on 
the interval $(0,c)$, where
\begin{equation}\label{c}
c=\frac{s-1+\sqrt{s^2+22s+1}}{3s}.
\end{equation}
If we write $d\mu_V(x;s)=\psi_V(x;s)dx$, the density is given by
\begin{equation}
\psi_V(x;s)=-\frac{1}{\pi}(ax+b)\sqrt{\frac{c-x}{x}}, 
\end{equation}
with 
\begin{equation}\label{ab}
a=-s, \qquad b=\frac{2s-2-\sqrt{s^2+22s+1}}{6}.
\end{equation}
\end{lemma}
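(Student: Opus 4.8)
The plan is to determine the equilibrium measure by exploiting the fact that for a one-cut support on an interval whose left endpoint is pinned at the hard edge $x=0$, the density must take the form $\psi_V(x;s)=\frac{1}{2\pi}\,Q(x)\sqrt{(c-x)/x}$ with $Q$ a polynomial whose degree is fixed by the degree of $V'$. Since $V(x;s)=x+s(x^2-x)$ has $V'(x;s)=1-s+2sx$ of degree one, I expect $Q$ to be a degree-one polynomial, so $\psi_V(x;s)=-\frac{1}{\pi}(ax+b)\sqrt{(c-x)/x}$ for constants $a,b$ and right endpoint $c$ to be determined. First I would differentiate the equality in the variational condition \eqref{variational} on the support, obtaining the singular integral equation $2\,\mathrm{P.V.}\!\int \psi_V(y;s)/(x-y)\,dy = V'(x;s)$ for $x\in(0,c)$; this is the standard way to convert the equilibrium problem into something solvable by resolvent/Cauchy-transform methods.

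The cleanest route is via the resolvent (or $g$-function derivative). Setting $G(z)=\int \psi_V(y;s)/(z-y)\,dy$, the $g$-function identity forces $G(z)=\tfrac{1}{2}\bigl(V'(z;s)-R(z)\bigr)$ on the support-complement, where $R(z)$ is the branch of $\sqrt{z(z-c)}$ (with a $1/\sqrt{z}$ singularity encoding the hard edge at $0$) times an appropriate polynomial; the density is then recovered from the jump of $G$ across $(0,c)$, namely $\psi_V(x;s)=-\tfrac{1}{2\pi i}\bigl(G_+(x)-G_-(x)\bigr)$. Concretely I would posit
\begin{equation}
G(z)=\frac{1}{2}\Bigl(V'(z;s)-(az+b)\sqrt{\tfrac{z-c}{z}}\,\Bigr)
\end{equation}
and fix $a,b,c$ by imposing the two analytic conditions that make this a genuine resolvent: the large-$z$ behaviour $G(z)=1/z+O(z^{-2})$ (which encodes both that $\mu_V$ is a probability measure and that there is no spurious growth), giving two equations, together with consistency of the hard-edge singularity at $z=0$. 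Expanding $\sqrt{(z-c)/z}=1-\tfrac{c}{2z}+O(z^{-2})$ and matching the $z^1$, $z^0$, and $z^{-1}$ coefficients against $V'(z;s)=2sz+(1-s)$ yields a small algebraic system in $a,b,c$.

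Matching the coefficient of $z$ forces $a=-s$ immediately (so that the $2sz$ term cancels against the $-az$ from the expansion). Matching the constant term then relates $b$ and $c$ to $1-s$, and matching the $1/z$ coefficient to the required residue $1$ gives a second relation; solving the resulting quadratic for $c$ produces $c=\bigl(s-1+\sqrt{s^2+22s+1}\bigr)/(3s)$ and $b=\bigl(2s-2-\sqrt{s^2+22s+1}\bigr)/6$, matching \eqref{c} and \eqref{ab}. I would then verify the two remaining requirements for this to be the true equilibrium measure: that $\psi_V\ge 0$ on $(0,c)$ (which amounts to checking the sign of $ax+b$ on the interval, using $a<0$ and the explicit $b,c$), and that the variational inequality in \eqref{variational} holds strictly off the support. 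The main obstacle I anticipate is not the algebra but justifying that the correct Ansatz is a one-cut regular measure supported on a single interval with a hard edge at $0$ (as opposed to the support detaching from the origin or splitting), and in particular confirming positivity of the density uniformly for all $s\in[0,1]$ and $\lambda>-1$; once the one-cut-at-the-hard-edge structure is granted, the rest is a determined computation.
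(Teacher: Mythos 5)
Your proposal follows essentially the same route as the paper: the paper's proof also introduces the resolvent $\omega(z;s)=\int_0^c \psi_V(x;s)/(z-x)\,dx$, notes that the variational equality forces $\omega_++\omega_-=V'$ on the support, posits $\omega(z;s)=\tfrac12 V'(z;s)+(az+b)\sqrt{(z-c)/z}$ (the paper writes $(z-c)^{1/2}$, but the stated cut on $[0,c]$ and the resulting density make clear the hard-edge factor $\sqrt{(z-c)/z}$ is meant), and determines $a$, $b$, $c$ solely from the normalization $\omega(z;s)=1/z+\mathcal{O}(z^{-2})$ at infinity --- exactly your matching of the $z^{1}$, $z^{0}$, $z^{-1}$ coefficients. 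Two remarks. First, the ``main obstacle'' you defer --- justifying the one-cut structure --- is dispatched by the paper in one line: $V(x;s)=x+s(x^2-x)$ is convex for $s\in[0,1]$, so by Saff--Totik \cite[Chapter IV, Theorem 1.11]{ST} the support is a single interval; your alternative plan (verify positivity of the density and the variational inequality off the support, then invoke uniqueness of the equilibrium measure) is also a legitimate and self-contained way to close this, but you should actually carry it out rather than leave it as an anticipated difficulty. Second, a bookkeeping inconsistency: with your ansatz $G(z)=\tfrac12\bigl(V'(z;s)-(az+b)\sqrt{(z-c)/z}\bigr)$ and recovery formula $\psi_V=-\tfrac{1}{2\pi i}(G_+-G_-)$, matching the coefficient of $z$ gives $a=2s$, not $a=-s$, and the density comes out as $+\tfrac{1}{2\pi}(ax+b)\sqrt{(c-x)/x}$ rather than $-\tfrac{1}{\pi}(ax+b)\sqrt{(c-x)/x}$. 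This describes the same measure (your $(a,b)$ would simply be $(-2)$ times the paper's), and the endpoint $c$ is unaffected, but to land on the constants literally as stated in \eqref{c} and \eqref{ab} you need the sign and normalization of the paper's ansatz, $\omega=\tfrac12 V'+(az+b)\sqrt{(z-c)/z}$. With that fixed, your matching conditions read $s+a=0$, $\tfrac{1-s}{2}+b-\tfrac{ac}{2}=0$, and $-\tfrac{bc}{2}-\tfrac{ac^{2}}{8}=1$, whose solution is precisely \eqref{c} and \eqref{ab}.
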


\begin{proof}
The potential $V(x;s)=x+s(x^2-x)$ is convex for any $s\in[0,1]$, hence following the general theory, see for instance the monograph of Saff and Totik \cite[Chapter IV, Theorem 1.11]{ST}, the equilibrium measure is supported on a single interval, say 
$[0,c]$. If such  equilibrium measure is $d\mu_V(x;s)=\psi_V(x;s)dx$, the function
$$
\omega(z;s)=\int_0^c \frac{\psi_V(x;s)}{z-x}dx
$$
satisfies 
\begin{equation}\label{eq_resolvent}
\begin{aligned}
\omega(z;s)&=\frac{1}{z}+\mathcal{O}(z^{-2}), \qquad z\to\infty,\\
\omega_+(x;s)+\omega_-(x;s)&=V'(x;s), \qquad x\in (0,c),
\end{aligned}
\end{equation}
the second identity being a consequence of the variational equations \eqref{variational}.

Consequently, we look for $\omega(z;s)$ of the form
$$
\omega(z;s)=\frac{V'(z;s)}{2}+(az+b)(z-c)^{1/2},
$$
with a branch cut on $[0,c]$. The first equation in \eqref{eq_resolvent} gives the coefficients $a$, $b$ and $c$ 
in \eqref{c} and \eqref{ab}.
\end{proof}

We observe that the form of the equilibrium measure is uniform in $s\in[0,1]$. A straightforward calculation from \eqref{c} shows that $c=c(s)$ is a decreasing function for $s\in[0,1]$, and $c\in[\tfrac{2}{3}\sqrt{6},4]$. Similarly, the extra zero of the density is $-b/a$, where $a$ and $b$ are given in \eqref{ab}, and it increasing with $s$ from $-\infty$ to $-\tfrac{2}{3}\sqrt{6}$. Since this extra zero is bounded away from the support of the equilibrium measure when $s\in[0,1]$, no critical transitions take place. This fact will be crucial in the calculation of the asymptotic expansions below.

\subsection{RH problem}

Following the original idea of Fokas, Its and Kitaev \cite{FIK} in this context, the (monic) semiclassical 
Laguerre polynomials $\pi_n(x)$ are the $(1,1)$ entry of a $2\times 2$ matrix 
$Y(z)=Y_n(z;\lambda,s):\mathbb{C}\setminus[0,\infty)\mapsto\mathbb{C}^{2\times 2}$ that satisfies the following RH problem:
\begin{enumerate}
\item $Y(z)$ is analytic in $\mathbb{C}\setminus[0,\infty)$.
\item On $(0,\infty)$, oriented from left to right, the boundary values of $Y$ satisfy
$$
Y_+(x)=Y_-(x)
\begin{pmatrix}
1 & x^{\lambda} e^{-NV(x;s)}\\
0 & 1
\end{pmatrix},
$$
where $Y_{\pm}(x)=\lim_{\varepsilon\to 0^+} Y(x\pm i\varepsilon)$, taken entrywise, indicates the boundary values from above and below the real axis respectively.
\item As $z\to\infty$, we have
\begin{equation}\label{asympY}
Y(z)=\left(I+\frac{Y_1}{z}+\frac{Y_2}{z^2}+\mathcal{O}\left(\frac{1}{z^3}\right)\right)
\begin{pmatrix}
z^n & 0\\
0 & z^{-n}
\end{pmatrix}
\end{equation}
\item As $z\to 0$, we have
\begin{equation}\label{asympY0}
Y(z)=
\begin{cases}
\begin{pmatrix}
\mathcal{O}(1) & \mathcal{O}(z^{\lambda})\\
\mathcal{O}(1) & \mathcal{O}(z^{\lambda})
\end{pmatrix},&  \qquad \lambda<0,\\
\begin{pmatrix}
\mathcal{O}(1) & \mathcal{O}(\log z)\\
\mathcal{O}(1) & \mathcal{O}(\log z)
\end{pmatrix},&  \qquad \lambda=0,\\
\begin{pmatrix}
\mathcal{O}(1) & \mathcal{O}(1)\\
\mathcal{O}(1) & \mathcal{O}(1)
\end{pmatrix},&  \qquad \lambda>0.
\end{cases}
\end{equation}
\end{enumerate}

It is known \cite{Deift} that the recurrence coefficients in 
\eqref{TTRRs} can be written as follows:
\begin{equation}\label{anbnY}
\alpha_{n,N}=\frac{(Y_2)_{12}}{(Y_1)_{12}}-(Y_1)_{22}, \qquad
\beta_{n,N}=(Y_1)_{12}(Y_1)_{21}
\end{equation}
where $Y_1$ and $Y_2$ are the matrices that appear in the asymptotic expansion \eqref{asympY}, 
see \cite[\S 3.2]{Ble} or \cite{Deift}.

\subsection{Steepest descent}
The steepest descent method of Deift and Zhou consists of a series of transformations that lead to a final RH problem that can be solved asymptotically as $N\to\infty$, uniformly in $z$ in the complex plane. Since we are only using the steepest descent method in order to 
prove existence of an asymptotic expansion in powers of $1/N$ for the recurrence coefficients, and not to obtain the details of the coefficients therein, the presentation will be quite brief. We refer the reader to the work of Vanlessen \cite{Vanlessen}, or 
Zhao et al. \cite{ZCD} for a more detailed explanation in a similar setting.

The basic steps in this case are the following:
\begin{equation}\label{steepest}
Y\mapsto T \mapsto S\mapsto R.
\end{equation}

The first step $Y\mapsto T$ is a normalization at infinity:
\begin{equation}\label{TY}
T(z)=
\begin{pmatrix}
e^{-N\ell/2} & 0\\
0 & e^{N\ell/2}
\end{pmatrix}
Y(z)
\begin{pmatrix}
e^{-N(g(z;s)-\ell/2)} & 0\\
0 & e^{N(g(z;s)-\ell/2)}
\end{pmatrix},
\end{equation}
where $\ell$ is a constant in $N$ (Lagrange multiplier of the equilibrium problem), and  $g$ is the logarithmic transform 
of the equilibrium measure:
\begin{equation}\label{gfunction}
g(z;s)=\int_0^{c} \log(z-x)d\mu_V(x;s),
\end{equation}
which is analytic in $\mathbb{C}\setminus (-\infty,c]$, with $c$ given by \eqref{c}, and as $z\to\infty$ satisfies
\begin{equation}\label{asympg}
g(z;s)=\log z-\frac{\mu_{1}(s)}{z}-\frac{\mu_{2}(s)}{2z^2}+\mathcal{O}(z^{-3}),
\end{equation}
where $\mu_{k}(s)=\int_0^{c} x^k d\mu_V(x;s)$, $k\geq 1$, are the moments of the measure $d\mu_V$, that can be 
computed explicitly. As a consequence, we have the expansion
\begin{equation}\label{asympexpg}
\begin{aligned}
e^{Ng(z;s)\sigma_3}
&=
\begin{pmatrix}
e^{Ng(z;s)} & 0\\
0 & e^{-Ng(z;s)}
\end{pmatrix}\\
&=
\begin{pmatrix}
z^N & 0\\
0 & z^{-N}
\end{pmatrix}
\left(
I+\frac{G_1}{z}+\frac{G_2}{z}+\mathcal{O}(z^{-3})
\right),
\end{aligned}
\end{equation}
as $z\to\infty$, where $G_1$ and $G_2$ are diagonal matrices (and dependent of $s$ and $N$). 

The second step $T(z)\mapsto S(z)$ deforms the jump contours by opening a lens around the interval $[0,c]$. This step does not 
make any change away from a small neighbourhood of $[0,c]$, and since we will be using 
information as $z\to\infty$ for the recurrence coefficients, see \eqref{anbnY}, we can replace $T(z)=S(z)$. 

The final step, $S(z)\mapsto R(z)$ involves both a global unimodular parametrix $P^{(\infty)}(z)$, away from the endpoints $z=0$ and $z=c$, and two unimodular 
local parametrices, $P_{\textrm{Airy}}(z)$ and $P_{\textrm{Bessel}}(z)$ built out of Airy functions in a neighbourhood of $z=c$ and Bessel functions in a neighbourhood of $z=0$. Then we construct 
$$
R(z)=
\begin{cases}
S(z)[P^{(\infty)}]^{-1}(z), & \qquad z\in\mathbb{C}\setminus \overline{D_{\delta}(0)}\cup \overline{D_{\delta}(c)},\\
S(z)[P_{\textrm{Airy}}]^{-1}(z), & \qquad z\in D_{\delta}(c),\\
S(z)[P_{\textrm{Bessel}}]^{-1}(z), & \qquad z\in D_{\delta}(0),\\
\end{cases}
$$
where $D_{\delta}(0)$ and $D_{\delta}(c)$ are discs of fixed radius $\delta>0$ around $z=0$ and $z=c$ respectively. 
The RH problem for $R(z)$ can be solved iteratively, since $R$ is normalized at infinity and 
all jumps are close to the identity, see \cite[\S 11]{Ble} or \cite{Deift}. The consequence is an asymptotic expansion of the form:
\begin{equation}\label{asympRN}
R(z)\sim\sum_{m=0}^{\infty}\frac{R^{(m)}(z)}{N^m}, \qquad N\to\infty,
\end{equation}
uniformly in $z$ away from a contour $\Sigma_R$ around the interval $[0,c]$, see 
\cite[Chapter 7]{Deift} or \cite{ZCD}. It is at this stage that the uniform form of the equilibrium 
measure with respect to $s$ is fundamentally important, since the parametrices 
depend on $s$ but they have the same structure for $s\in[0,1]$, i.e. \eqref{asympRN} holds uniformly with respect to $s\in[0,1]$. 

In addition, $R$ has an asymptotic expansion as $z\to\infty$, that we write
\begin{equation}\label{asympRz}
R(z)\sim I+\sum_{k=1}^{\infty}\frac{R_k}{z^m},
\end{equation}
and combining \eqref{asympRz} with \eqref{asympRN}, each coefficient $R_k$ can be expanded 
asymptotically in inverse powers of $N$.

Away from the interval $[0,c]$, we write $T(z)=S(z)=R(z)P^{(\infty)}(z)$ and replace this in \eqref{TY}:
\begin{equation}\label{YR}
Y(z)=\begin{pmatrix}
e^{N\ell/2} & 0\\
0 & e^{-N\ell/2}
\end{pmatrix}
R(z)P^{(\infty)}(z)
\begin{pmatrix}
e^{N(g(z;s)-\ell/2)} & 0\\
0 & e^{-N(g(z;s)-\ell/2)}
\end{pmatrix}.
\end{equation}

The global parametrix 
$P^{(\infty)}$ satisfies a RH problem analogous to the one presented in \cite[Section 3.5]{Vanlessen}, 
but on $[0,c]$ instead of $[0,1]$. Making the corresponding changes, we have
\begin{equation}\label{asympPinf}
P^{(\infty)}(z)
=
I+\frac{P_1^{(\infty)}}{z}+\frac{P_2^{(\infty)}}{z^2}+\mathcal{O}(z^{-3}), 
\end{equation}  
as $z\to\infty$, with some matrices $P_1^{(\infty)}$ and $P_2^{(\infty)}$ that can be computed explicitly, but whose precise 
form is not relevant in the present discussion. Using \eqref{asympRz} in \eqref{YR} and identifying terms, we obtain
\begin{equation}\label{Y1Y2R}
\begin{aligned}
Y_1
&=
e^{\frac{N\ell\sigma_3}{2}}
(P_1^{(\infty)}+G_1+R_1)
e^{-\frac{N\ell\sigma_3}{2}}\\
Y_2
&=
e^{\frac{N\ell\sigma_3}{2}}
(P_2^{(\infty)}+G_2+R_2+R_1P_1^{(\infty)}+(P_1^{(\infty)}+R_1)G_1)
e^{-\frac{N\ell\sigma_3}{2}}
\end{aligned}
\end{equation}

From this, we can obtain an expression for the recurrence coefficients in terms of all the 
matrices involved. The terms in the expansion of $P^{(\infty)}$ are independent of $N$, and the $G_k$ coefficients in \eqref{asympexpg} contain only integer powers of $N$. This result, together with \eqref{Y1Y2R}, gives asymptotic expansions 
in powers of $1/N$ for the recurrence coefficients $\alpha_{N,N}$ and $\beta_{N,N}$, as desired.

Finally, we note that this result also applies to $\alpha_{n,N}$ and $\beta_{n,N}$ with $n/N=q$, which is needed in the string equations \eqref{string_Vs2}. We can rewrite
$$
NV(x;s)=n\frac{V(x;s)}{q}, \qquad q=\frac{n}{N},
$$
and work with the potential $V(x;s)/q$ throughout. Since $q$ will be close to $1$ when both $n$ and $N$ are large, and all quantities depend 
analytically on $q$ (in particular the equilibrium measure in Lemma \ref{lemma_eq}), we get the same kind of asymptotic expansions in the steepest descent method.

\section*{Acknowledgements}
A. D. acknowledges financial support from projects MTM2012-36732-C03-01 and MTM2012-34787 from the Spanish Ministry of Economy and Competitivity. N. J. S. acknowledges financial support from a Leverhulme Trust Early Career Fellowship ECF-2014-309. The authors would like to thank the anonymous referees for a number of useful remarks and corrections that were added in the revised version.

\appendix
\section{Asymptotic expansions for LUE and gGUE partition functions}
\label{ApA}
\begin{lemma}
\label{lem:ApA}
The partition function of the Laguerre Unitary Ensemble:
\begin{equation}\label{ZNLUE}
Z^{\mathrm{LUE}}_{N}
=
\int_0^{\infty}\cdots\int_0^{\infty}\prod_{j=1}^N x_j^\lambda e^{-Nx_j}\prod_{1\leq j<k\leq N}(x_k-x_j)^2\, dx_{1}\ldots dx_{N},
\end{equation}
with $\lambda>-1$, can be written as
\begin{equation}\label{ZNLUE2}
Z^{\mathrm{LUE}}_{N}
=
N^{-N(N+\lambda)}\prod_{j=1}^N \Gamma(j+1)\Gamma(j+\lambda),
\end{equation}
and as $N\to\infty$ we have
\begin{equation}\label{ZNLUE_asymp}
\begin{aligned}
\log Z^{\mathrm{LUE}}_{N}
&=
-\frac{3}{2}N^2+N\log N+\left(\log(2\pi)-1-\lambda\right)N+\frac{3\lambda^2+2}{6}\log N\\
&+\frac{1+3(\lambda+1)\log(2\pi)}{6}-2\log A-\log G(\lambda+1)\\
&+\frac{2\lambda^3-\lambda+1}{12N}+\mathcal{O}(N^{-2}),
\end{aligned}
\end{equation}
where $G$ is the Barnes $G$-function, see \cite[\S 5.17]{OLBC10}, and 
\begin{equation}\label{A}
A=\exp\left(\frac{1}{12}-\zeta'(-1)\right)
\end{equation}
is the Glaisher--Kinkelin constant, $A=1.28242 71291 \ldots$
\end{lemma}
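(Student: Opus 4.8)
The plan is to establish the closed form \eqref{ZNLUE2} exactly and then read off the asymptotics \eqref{ZNLUE_asymp} from the known expansion of the Barnes $G$-function. For the exact evaluation I would begin from the classical Laguerre case of the Selberg integral for the \emph{unscaled} weight $x^{\lambda}e^{-x}$,
\begin{equation*}
\int_0^\infty\cdots\int_0^\infty \prod_{j=1}^N x_j^{\lambda}e^{-x_j}\prod_{1\le j<k\le N}(x_k-x_j)^2\,dx_1\cdots dx_N=\prod_{k=1}^N \Gamma(k+1)\Gamma(k+\lambda),
\end{equation*}
which is the $\gamma=1$ specialisation of the Laguerre--Selberg integral. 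To match the factor $N$ in the exponent of \eqref{ZNLUE}, I would substitute $x_j\mapsto x_j/N$; the resulting powers of $N$ from the Jacobian, the $\lambda$-powers, and the $\binom{N}{2}$ Vandermonde differences combine into $N^{-N}\cdot N^{-\lambda N}\cdot N^{-N(N-1)}=N^{-N(N+\lambda)}$, which is precisely the prefactor in \eqref{ZNLUE2}.

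Taking logarithms reduces the problem to the large-$N$ behaviour of
\begin{equation*}
\log Z^{\mathrm{LUE}}_N=-N(N+\lambda)\log N+\sum_{k=1}^N\log\Gamma(k+1)+\sum_{k=1}^N\log\Gamma(k+\lambda).
\end{equation*}
Using the functional equation $G(z+1)=\Gamma(z)G(z)$ of the Barnes $G$-function, the two products telescope into $\prod_{k=1}^N\Gamma(k+1)=G(N+2)$ and $\prod_{k=1}^N\Gamma(k+\lambda)=G(N+\lambda+1)/G(\lambda+1)$, so that
\begin{equation*}
\log Z^{\mathrm{LUE}}_N=-N(N+\lambda)\log N+\log G(N+2)+\log G(N+\lambda+1)-\log G(\lambda+1).
\end{equation*}

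I would then insert the asymptotic expansion
\begin{equation*}
\log G(z+1)=\frac{z^2}{2}\log z-\frac{3z^2}{4}+\frac{z}{2}\log(2\pi)-\frac{1}{12}\log z+\zeta'(-1)+\mathcal{O}(z^{-2})
\end{equation*}
with $z=N+1$ and $z=N+\lambda$, recalling that $\zeta'(-1)=\tfrac{1}{12}-\log A$ with $A$ the Glaisher--Kinkelin constant \eqref{A}. Expanding each $\log(N+a)=\log N+\tfrac{a}{N}-\tfrac{a^2}{2N^2}+\tfrac{a^3}{3N^3}-\cdots$ to the order needed and collecting powers of $N$ then produces \eqref{ZNLUE_asymp}.

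The crux is organised bookkeeping rather than any single hard estimate. The leading $\tfrac{z^2}{2}\log z$ contributions from the two $G$-factors generate $N^2\log N$ and $N\log N$ terms; the former must cancel completely against the prefactor $-N(N+\lambda)\log N$, while the latter must combine with it to leave exactly the coefficient $+1$ seen in \eqref{ZNLUE_asymp}, and verifying these cancellations is what pins down the leading behaviour. The surviving $\log N$ coefficient $\tfrac{3\lambda^2+2}{6}$ and the $N$ coefficient $\log(2\pi)-1-\lambda$ then come from the lower-order parts of the same expansions; the two copies of $\zeta'(-1)=\tfrac{1}{12}-\log A$ assemble with $-\log G(\lambda+1)$ into the stated constant; and the cubic terms of the $\log(N+a)$ expansions yield the $\tfrac{2\lambda^3-\lambda+1}{12N}$ coefficient. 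Since the correction series of the Barnes expansion is of order $z^{-2}=\mathcal{O}(N^{-2})$, it enters only the remainder, which justifies the $\mathcal{O}(N^{-2})$ error term.
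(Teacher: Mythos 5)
Your proposal is correct and follows essentially the same route as the paper: the exact evaluation via the Laguerre specialisation of the Selberg integral (with the scaling $x_j\mapsto x_j/N$ producing the prefactor $N^{-N(N+\lambda)}$), the rewriting $Z^{\mathrm{LUE}}_N=N^{-N(N+\lambda)}G(N+2)G(N+\lambda+1)/\bigl(G(2)G(\lambda+1)\bigr)$ via the functional equation of the Barnes $G$-function, and insertion of its standard large-$z$ expansion. The only cosmetic differences are that you spell out the rescaling and the term-by-term bookkeeping by hand (using the fully expanded form of the $G$-asymptotics with constant $\zeta'(-1)=\tfrac{1}{12}-\log A$), where the paper cites references for the Selberg evaluation and performs the expansion in {\sc Maple}; your bookkeeping does reproduce all stated coefficients, including the $\tfrac{2\lambda^3-\lambda+1}{12N}$ term.
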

\begin{proof}
The explicit formula \eqref{ZNLUE2} is a consequence of the fact that \eqref{ZNLUE} can be 
written as a Selberg integral. See \cite[Theorem 2.5.8, Corollary 2.5.9]{AGZ}, and also 
\cite{ZCD} and the monograph by Mehta \cite{Mehta}. Alternatively, one can use the fact, see \cite[\S 18]{Ble}, that
$$
Z^{\mathrm{LUE}}_{N}=N!\prod_{j=0}^{N-1} h^{\textrm{L}}_{j}
=
N!\prod_{j=0}^{N-1}N^{-2j-\lambda-1}\Gamma(j+1)\Gamma(j+\lambda+1),
$$
in terms of the normalizing constants of (scaled and monic) Laguerre polynomials.

Next, we rewrite \eqref{ZNLUE2} as follows:
\begin{equation}\label{ZNLUE_G}
Z^{\mathrm{LUE}}_{N} = N^{-N(N+\lambda)}\frac{G(N+2)G(N+\lambda+1)}{G(2)G(\lambda+1)}
\end{equation}
again in terms of the Barnes $G$-function. 
This function has a known asymptotic expansion:
\begin{equation}\label{asympG}
\begin{aligned}
\log G\left(z+1\right)
&\sim\tfrac{1}{4}z^{2}+z\log\Gamma(z+1)-\left(\tfrac{1}{2}z(z+1)+\tfrac{1}{12}\right)\log z\\
&-\log A+\sum_{k=1}^{\infty}\frac{B_{2k+2}}{2k(2k+1)(2k+2)z^{2k}}, 
\end{aligned}
\end{equation}
as $z\to\infty$ with $|\textrm{arg}\, z|<\pi$, see for example \cite[5.17.5]{OLBC10}. Here $B_{2k+2}$ are Bernoulli numbers. Replacing this asymptotic expansion in
\eqref{ZNLUE_G} and using {\sc Maple}, we obtain \eqref{ZNLUE_asymp}.
\end{proof}


Next, we consider the generalised GUE partition function:
\begin{equation}\label{pf_gGUE}
Z_{N}^{\mathrm{gGUE}} := \int_{\mathbb{R}^{N}}\prod_{j=1}^{N}|x_{j}|^{\lambda}e^{-Nx_{j}^{2}}\prod_{1 \leq k < j \leq N}(x_{k}-x_{j})^{2}\,dx_{1}\ldots dx_{N}
\end{equation}

\begin{lemma}
\label{lem:ApB}
For fixed $\lambda>-1$, the partition function \eqref{pf_gGUE} can be written as 
\begin{align}\label{pf_gGUE2}
\nonumber
Z^{\mathrm{gGUE}}_{N} &= (2N)^{-N^{2}/2}(2\pi)^{N/2}N^{-\lambda N/2}\prod_{j=1}^{N}
\frac{\Gamma(\tfrac{\lambda+1}{2}+\lfloor \tfrac{j}{2} \rfloor)}{\Gamma(\tfrac{1}{2}+\lfloor \tfrac{j}{2} \rfloor)}j!\\
&=\nonumber(2N)^{-N^{2}/2}(2\pi)^{N/2}N^{-\lambda N/2} \frac{G(\tfrac{3}{2})G(\tfrac{1}{2})}{G(\tfrac{\lambda+3}{2})G(\tfrac{\lambda+1}{2})}\\
&\times\frac{G(N+2)G(\tfrac{\lambda+N+3}{2})G(\tfrac{\lambda+N+1}{2})}{G(\tfrac{N+3}{2})G(\tfrac{N+1}{2})}
\end{align}
where $G$ is the Barnes $G$-function. Here $\lfloor j/2 \rfloor$ denotes the largest integer less than or equal to $j/2$, and 
we assumed that $N$ is even for simplicity. As $N\to\infty$, we have
\begin{equation}\label{ZNgGUE_asymp}
\begin{aligned}
\log Z^{\mathrm{gGUE}}_{N}
&=
\left(-\frac{3}{4}-\frac{\log 2}{2}\right)N^{2}
+N\log(N)\\
&+\left(\log(2\pi)-\frac{\lambda(1+\log(2))+2}{2}\right)N\\
&+\frac{3\lambda^{2}+5}{12}\log(N)
+c_{0}+\frac{c_{1}}{N}+\mathcal{O}(N^{-2}),
\end{aligned}
\end{equation}
where $c_{0}$ and $c_{1}$ are explicit constants
\begin{align}
\nonumber
c_{0} &= \frac{1-3\lambda^{2} \log(2)-12 \log A+6(\lambda+1)\log(2\pi)}{12}+
\log \frac{G(\tfrac{3}{2})G(\tfrac{1}{2})}{G(\tfrac{\lambda+3}{2})G(\tfrac{\lambda+1}{2})}\\ 
c_{1} &= \frac{\lambda^3+\lambda+1}{12}.
\end{align}
\end{lemma}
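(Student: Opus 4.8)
The plan is to follow exactly the template of Lemma \ref{lem:ApA}: first obtain a closed form for \eqref{pf_gGUE} as a finite product of Gamma functions, then repackage it in terms of Barnes $G$-functions, and finally insert the expansion \eqref{asympG} and collect powers of $N$ in a computer algebra system. For the closed form I would start from the standard identity expressing a $\beta=2$ partition function as a product of monic orthogonal-polynomial norms,
\[
Z^{\mathrm{gGUE}}_{N}=N!\prod_{j=0}^{N-1}h_{j},\qquad h_{j}=\int_{\mathbb{R}}\pi_{j}(x)^{2}|x|^{\lambda}e^{-Nx^{2}}\,dx,
\]
where the $\pi_{j}$ are the (generalized Hermite) monic polynomials for the even weight $|x|^{\lambda}e^{-Nx^{2}}$. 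The appearance of $\lfloor j/2\rfloor$ in \eqref{pf_gGUE2} is the signature of the even/odd decomposition of this symmetric weight, so this route is the natural one.

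The key computation exploits that the weight is even. Rescaling by $t=\sqrt{N}x$ gives $h_{j}=N^{-j-(\lambda+1)/2}\tilde h_{j}$, where $\tilde h_{j}$ is the norm for $|t|^{\lambda}e^{-t^{2}}$; summing the exponents produces the overall power $N^{-N(N+\lambda)/2}$, which matches the factor $(2N)^{-N^{2}/2}N^{-\lambda N/2}$ in \eqref{pf_gGUE2} up to the $\lambda$-independent constant $2^{-N^{2}/2}$. Writing the even and odd polynomials as $\pi_{2m}(t)=p_{m}(t^{2})$ and $\pi_{2m+1}(t)=t\,q_{m}(t^{2})$ and substituting $u=t^{2}$ reduces $\tilde h_{2m}$ and $\tilde h_{2m+1}$ to monic Laguerre norms with parameters $(\lambda-1)/2$ and $(\lambda+1)/2$, giving $\tilde h_{2m}=m!\,\Gamma(m+\tfrac{\lambda+1}{2})$ and $\tilde h_{2m+1}=m!\,\Gamma(m+\tfrac{\lambda+3}{2})$. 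Collecting these with the $\lfloor j/2\rfloor$ bookkeeping (cleanest for $N$ even, which is why that assumption is made) reproduces the first line of \eqref{pf_gGUE2}.

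The passage to the second line of \eqref{pf_gGUE2} is then a direct application of the telescoping identity $\prod_{k=0}^{n-1}\Gamma(z+k)=G(z+n)/G(z)$ to each of the three Gamma products: the $\prod_{j=1}^{N}j!$ factor gives $G(N+2)$, the numerator $\prod_{j=1}^{N}\Gamma(\tfrac{\lambda+1}{2}+\lfloor j/2\rfloor)$ splits (by even/odd indices) into $G(\tfrac{\lambda+N+3}{2})G(\tfrac{\lambda+N+1}{2})/[G(\tfrac{\lambda+3}{2})G(\tfrac{\lambda+1}{2})]$, and the denominator $\prod_{j=1}^{N}\Gamma(\tfrac12+\lfloor j/2\rfloor)$ gives $G(\tfrac{N+3}{2})G(\tfrac{N+1}{2})/[G(\tfrac32)G(\tfrac12)]$. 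Assembling these yields precisely the stated $G$-function expression, with the constant $2^{-N^{2}/2}(2\pi)^{N/2}$ as the $\lambda$-independent remainder. For the asymptotics I would insert \eqref{asympG} for each $G$-function together with Stirling's formula for $\log\Gamma$, expand in inverse powers of $N$, and collect; this is mechanical in {\sc Maple} and yields \eqref{ZNgGUE_asymp}, with $A$ and $\zeta'(-1)$ entering through $\log A$ in \eqref{asympG} via \eqref{A}, and the half-integer arguments producing the $\log 2$ and $\log(2\pi)$ contributions.

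The main obstacle is the closed-form step, and within it two points require care: (i) executing the even/odd reduction and the $\lfloor j/2\rfloor$ indexing without sign or index-shift errors, and (ii) confirming that the $\lambda$-independent prefactor $2^{-N^{2}/2}(2\pi)^{N/2}$ comes out exactly, which amounts to a Barnes $G$ duplication identity relating $\Gamma(N+1)\,G(\tfrac{N}{2}+1)^{2}$ to $G(N+2)$ and the half-integer $G$-values (equivalently, repeated use of the Legendre duplication formula). By contrast, once \eqref{pf_gGUE2} is in hand, the asymptotic expansion is entirely routine and parallels Lemma \ref{lem:ApA}.
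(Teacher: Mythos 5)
Your proposal is correct, and for the closed-form evaluation it takes a genuinely different route from the paper. The paper works with moments rather than norms: by the Heine identity \eqref{Heine} it writes $Z^{\mathrm{gGUE}}_{N}$ as $N!$ times a Hankel determinant whose entries $\Phi_{i,j}=\Gamma\bigl(\tfrac{\lambda+1+i+j}{2}\bigr)$ vanish whenever $i+j$ is odd, uses this checkerboard structure to factor the determinant into an even-index block and an odd-index block, evaluates each block with the identity \eqref{gammadet}, and then fixes every $\lambda$-independent constant in one stroke by dividing by $Z^{\mathrm{GUE}}_{N}$, observing that the ratio must equal $1$ at $\lambda=0$, and quoting the classical GUE formula. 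You instead work with the product of orthogonal-polynomial norms $Z^{\mathrm{gGUE}}_{N}=N!\prod_{j=0}^{N-1}h_{j}$ and use the even/odd splitting $\pi_{2m}(t)=p_{m}(t^{2})$, $\pi_{2m+1}(t)=t\,q_{m}(t^{2})$ of the symmetric weight to reduce the generalized Hermite norms to explicit monic Laguerre norms $m!\,\Gamma(m+\tfrac{\lambda+1}{2})$ and $m!\,\Gamma(m+\tfrac{\lambda+3}{2})$, recovering the prefactor $2^{-N^{2}/2}(2\pi)^{N/2}$ by Legendre duplication rather than by normalizing against the GUE. The two arguments are close relatives --- the checkerboard factorization of the Hankel determinant is precisely the determinantal counterpart of your even/odd polynomial splitting --- but yours is the more constructive: all constants, including the GUE normalization itself, emerge from the Laguerre norms, and your duplication bookkeeping does close up correctly, since $[m!]^{2}\,\Gamma(m+\tfrac12)\Gamma(m+\tfrac32)=\pi\,(2m)!\,(2m+1)!\,2^{-4m-1}$, and the product over $0\le m\le N/2-1$ together with the extra $N!$ yields exactly $2^{-N^{2}/2}(2\pi)^{N/2}\prod_{j=1}^{N}j!$. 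The paper's argument, by contrast, requires less computation precisely because the $\lambda$-independent constants are never evaluated, only inherited from the known $Z^{\mathrm{GUE}}_{N}$. From the first line of \eqref{pf_gGUE2} onward the two proofs coincide: the same telescoping $\prod_{k=0}^{n-1}\Gamma(z+k)=G(z+n)/G(z)$ gives the Barnes $G$ form, and the asymptotics \eqref{ZNgGUE_asymp} follow mechanically from \eqref{asympG}, exactly as in Lemma \ref{lem:ApA}.
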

\begin{remark}
We point out that the partition function \eqref{pf_gGUE} admits a natural generalization involving a fixed number of spectral singularities (of Fisher-Hartwig type) in the integrand. The relevant asymptotics in that case were calculated in \cite{Kra07}. 
\end{remark}
\begin{proof}
The first equality in \eqref{pf_gGUE2} was obtained by Mehta and Normand in \cite{MN}. For completeness we reproduce their derivation here. The Heine identity 
\begin{equation}
Z_N(\lambda)=N!\, D_N(\lambda), \qquad D_N(\lambda)= \det\left[\mu_{j+k}\right]_{j,k=0}^{N-1}, \label{Heine}
\end{equation}
allows us to write the partition function in terms of the Hankel determinant, which is constructed 
with the moments of the weight function:
\begin{equation}
\mu_k=\mu_k(\lambda)=\int_0^{\infty} x^k x^{\lambda}e^{-x^{2}}\, dx, \qquad k\geq 0.
\end{equation}
Thus, the partition function \eqref{pf_gGUE} can be written as
\begin{equation}
\begin{split}
Z^{\mathrm{gGUE}}_{N} &= c^{(\lambda)}_{N}\,\det\bigg\{\int_{\mathbb{R}}\,x^{i+j}|x|^{\lambda}e^{-x^{2}}\,dx\bigg\}_{i,j=0}^{N-1} = c^{(\lambda)}_{N}\,\det\bigg\{\Phi_{i,j}\bigg\}_{i,j=0}^{N-1} \label{singledet}
\end{split}
\end{equation}
where $c^{(\lambda)}_{N} = N^{-N(N+\lambda)/2}N!$ and $\Phi_{i,j} = \Gamma((\lambda+1+i+j)/2)$ if $i+j$ is even and $\Phi_{i,j}=0$ if $i+j$ is odd. This determinant has a `checkerboard structure' of zeros and by elementary row and column manipulations, it can be arranged so that all $\Phi_{i,j}$ with purely even indices appear in the top-left block and $\Phi_{i,j}$ with odd indices in the bottom-right. This allows us to write \eqref{singledet} as a product
\begin{equation}
Z^{\mathrm{gGUE}}_{N} = c^{(\lambda)}_{N}\,\det\{\Phi_{2i,2j}\}_{i,j=0}^{\lfloor(N-1)/2\rfloor}\det\{\Phi_{2i+1,2j+1}\}_{i,j=0}^{\lfloor (N-2)/2\rfloor}. \label{gGUEprod}
\end{equation}
The latter determinants can be computed from the simple fact that for generic $z \in \mathbb{C}$ we have
\begin{equation}
\det\{\Gamma(z+i+j)\}_{i,j=0}^{M} = \prod_{j=0}^{M}j!\Gamma(z+j) \label{gammadet}
\end{equation}
which is a simple exercise to prove from, say the classical Laplace expansion of the determinant. Applying \eqref{gammadet} to \eqref{gGUEprod} shows that
\begin{equation}
\frac{Z^{\mathrm{gGUE}}_{N}}{Z^{\mathrm{GUE}}_{N}} = N^{-\lambda N/2}\prod_{j=1}^{N}\frac{\Gamma(\tfrac{\lambda+1}{2}+\lfloor \tfrac{j}{2} \rfloor)}{\Gamma(\tfrac{1}{2}+\lfloor \tfrac{j}{2} \rfloor)} \label{ratioGUEs}
\end{equation}
where we used that the left-hand side must equal $1$ when $\lambda=0$. The first equality in \eqref{pf_gGUE2} now follows from \eqref{ratioGUEs} and the well-known formula for $Z^{\mathrm{GUE}}_{N} := Z^{\mathrm{gGUE}}_{N}|_{\lambda=0}$ (see \textit{e.g.} \cite{Mehta}). The second equality in \eqref{pf_gGUE2} and the asymptotics follow from the general properties and corresponding asymptotic expansion \eqref{asympG} of the Barnes $G$-function.

\end{proof}


\end{document}